\newtheorem{thm}{Theorem}
\newtheorem{cor}[]{Corollary}
\newcommand{\be}{\begin{equation}}
\newcommand{\ee}{\end{equation}}
\newcommand{\ba}{\begin{eqnarray}}
\newcommand{\ea}{\end{eqnarray}}
\begin{document}

\title[Killing tensors in pp-wave spacetimes]{Killing tensors in pp-wave spacetimes}

\author{Aidan J Keane$^\sharp$ and Brian O J Tupper$^\natural$}
\address{$^\sharp$ 87 Carlton Place, Glasgow G5 9TD, Scotland, UK.\\
$^\natural$ Department of Mathematics and Statistics, University of
New Brunswick\\Fredericton, New Brunswick, Canada E3B 5A3.}

\eads{\mailto{aidan@countingthoughts.com}, \mailto{bt32@rogers.com}}

\begin{abstract}
The formal solution of the second order Killing tensor equations for the general pp-wave spacetime
is given. The Killing tensor equations are integrated fully for some specific pp-wave spacetimes.
In particular, the complete solution is given for the conformally flat plane wave spacetimes
and we find that irreducible Killing tensors arise for specific classes. The maximum number of independent
irreducible Killing tensors admitted by a conformally flat plane wave spacetime is shown to be six.
It is shown that every pp-wave spacetime that admits an homothety will admit a Killing tensor
of Koutras type and, with the exception of the singular scale-invariant plane wave spacetimes,
this Killing tensor is irreducible.
\end{abstract}

\pacs{02.40Ky, 04.20Jb, 04.40Nr}

\section{Introduction}
Let $\mathcal{M}$ denote a four-dimensional spacetime manifold with 
Lorentzian metric $g_{ab}$ and metric connection $\Gamma^a_{bc}$. $R_{abcd}$, $R_{ab}$ and $C_{abcd}$
denote the Riemann curvature tensor, Ricci tensor and Weyl tensor respectively.
$\mathcal{L}_X$ denotes the Lie derivative operator with respect to a vector field $X$ on $\mathcal{M}$
and a semicolon denotes the covariant derivative arising from $g_{ab}$ in the usual way.
Symmetrization of index pairs of a tensor field on $\mathcal{M}$ is indicated
by round brackets, i.e., $T_{(ab)} = \case{1}{2} (T_{ab} + T_{ba})$.
For Lie algebras ${\cal A}$ and ${\cal B}$,  the notation ${\cal A} \supset {\cal B}$ means ${\cal B}$
is a subalgebra of ${\cal A}$.

A vector field $X$ on $\mathcal{M}$ which satisfies
\be
(\mathcal{L}_X g)_{ab} = 0 \iff X_{(a;b)} = 0 \label{killingvector}
\ee
is referred to as a {\it Killing vector field} (KV). The concept of a KV can be 
generalized in a variety of ways. A totally symmetric tensor field $K$ of order 
$r$ on $\mathcal{M}$ satisfying
\be
K_{(a_1 \cdots a_r;{a_{r+1}})} = 0 \label{eqn:generalkillingtensor}
\ee
is referred to as a {\it Killing tensor field} (KT). A KT with $r=1$ is a KV.
KTs are of interest principally because of their 
association with polynomial first integrals of the geodesic equation:
If $t$ is the geodesic tangent vector then the quantity
$K_{a_1 \cdots a_r} t^{a_1} \cdots t^{a_r}$ is a first integral of the geodesic motion.
The set of all KVs on $\mathcal{M}$ form a Lie algebra under the bracket operation.
Similarly, the set of all KTs on $\mathcal{M}$ form a {\it graded algebra} 
under the Schouten-Nijenhuis bracket operation \cite{dolan89}.
In this work we shall restrict attention to the 
{\it second order} KTs, i.e., those which satisfy
\be
K_{(ab;c)} = 0, \qquad K_{ab} = K_{ba} .
\label{eqn:killingtensor}
\ee
The metric tensor itself and all symmetrized products of KVs 
are KTs, as are all linear combinations with constant coefficients, i.e.,
\be
K_{ab} = c_0 g_{ab} 
+ \sum^n_{I = 1} \sum^n_{J = I} c_{IJ} X_{I(a} X_{|J|b)}
\label{eqn:reducible}
\ee
is a KT, where $J \ge I$ and $n$ is the dimension of the Lie algebra of KVs. KTs which can be written in
the form (\ref{eqn:reducible}) are known as {\it reducible}, otherwise they are {\it irreducible}.

The maximum number of independent KTs admitted by a four-dimensional spacetime
is 50 and this maximum number is attained if and only if the spacetime is of 
constant curvature \cite{katzin65}, in which case the 50 KTs are reducible \cite{hauser75a}.
The Kerr metric \cite{kerr63} is probably the most well known and interesting example of a
spacetime admitting an irreducible KT \cite{carter68}, \cite{walker70}.
KTs are admitted by other Petrov type $D$ spacetimes \cite{walker70} - \cite{hughston73}.
Kimura \cite{kimura76} - \cite{kimura79} investigated KTs in static spherically
symmetric spacetimes having spatial parts of non-constant curvature: Of particular note are
spacetimes that admit 8 and 11 independent irreducible KTs \cite{kimura79}.
Hauser and Malhiot made a similar study under the assumption that the KTs are independent
of time \cite{hauser74}.
The Taub-NUT spacetime admits four KTs \cite{rietdijk} and the Euclidean Taub-NUT
spacetime admits three KTs \cite{visinescu09}.

It is of interest to investigate the existence of KTs in other spacetimes and in this paper
we investigate KTs, in particular irreducible KTs, in pp-wave spacetimes. There has been
some previous work in this area:
Cosgrove \cite{cosgrove78} considered stationary axisymmetric vacuum spacetimes and has found some
examples of KTs in pp-wave spacetimes.
Our motivation for considering KTs in pp-wave spacetimes is twofold:
(i) A particular class of pp-wave spacetimes, the plane wave spacetimes, exhibit a high
degree of symmetry/conformal symmetry being either of Petrov type $O$ or $N$. The type $O$
spacetimes admit the maximum conformal symmetry and the type $N$ plane wave spacetimes
admit the highest degree of conformal symmetry below the type $O$ \cite{DefriseCarter}, \cite{HallSteele}.
Thus, it is natural to consider this class of spacetime since it is likely to admit
further symmetries.
(ii) Physically, pp-wave spacetimes represent radiation moving at the speed of light,
and so are of particular interest in the theory of gravitational radiation.
Further, gravitational plane wave spacetimes have applications in String Theory \cite{blau06} and arise
naturally as the Penrose Limit \cite{penrose76} of any spacetime, see also \cite{blau06}.

We shall now state precisely what we mean by a pp-wave spacetime.
We define a pp-wave spacetime to be a non-flat spacetime which admits a covariantly constant,
nowhere zero, null bivector. The line element for such a spacetime can be written \cite{ehlers62}
\be
ds^2 =  - 2 du dv - 2 H(u,y,z) du^2 + dy^2 + dz^2.
\label{eqn:ppw}
\ee
A pp-wave spacetime admits a covariantly constant, nowhere zero, null vector field $k$,
which is necessarily a KV, and has the form
\be
k^a = \delta^a_v, \qquad k_a = - \delta^u_a.
\label{eqn:kvk}
\ee
Sippel and Goenner \cite{sippel86} give the form of the Riemann, Weyl and Ricci tensors for the 
spacetime with line element (\ref{eqn:ppw}), the latter being
\[
R_{ab} = (H_{,yy} + H_{,zz}) k_a k_b.
\]
It is clear from this that vacuum and pure radiation fields (with the possibility for null
electromagnetic fields as a special case) can occur. The weak and dominant energy conditions \cite{hawking1973} are
satisfied if
\be
H_{,yy} + H_{,zz} \ge 0.
\label{eq:energycondition}
\ee
The spacetime (\ref{eqn:ppw}) is vacuum if
$H_{,yy} + H_{,zz} = 0$ and conformally flat if $H_{,yy} = H_{,zz}$ and $H_{,yz}=0$. The pp-wave spacetime
is of Petrov type $N$ or $O$, which can be deduced from the form of the Weyl tensor \cite{sippel86}, and
the vacuum case cannot occur for type $O$.
The Riemann curvature tensor satisfies $R_{abcd} k^d = 0$ and if the Weyl tensor is nowhere
zero then $R_{abcd} k^d = C_{abcd} k^d = 0$ and $k$ is a repeated principal null direction of the Weyl tensor.
The similarity of the Weyl tensor for the type $N$ pp-wave spacetime and the electromagnetic
field tensor for electromagnetic plane waves permits the interpretation as {\it gravitational waves};
The vanishing of the expansion and twist of the rays justifies the term {\it plane-fronted}, and
the constancy of $k$ implies {\it parallel rays}: Hence the designation of the fields as
{\it plane-fronted waves with parallel rays}, or {\it pp-waves}.
We note that in \cite{KSMH2} a pp-wave is defined to be a spacetime admitting only the
covariantly constant, nowhere zero, null vector field $k$ and that the imposition of the
conditions on the energy-momentum tensor of the types above are required in order to
obtain the line element of the form (\ref{eqn:ppw}). Ehlers and Kundt \cite{ehlers62}
defined a pp-wave to be a vacuum spacetime. See \cite{KSMH2}, \cite{griffiths09} for an overview.

The wave interpretation permits one to define an amplitude and polarization
for the type $N$ pp-wave spacetimes \cite{ehlers62}. (Type $O$ spacetimes have vanishing Weyl tensor and
hence zero amplitude.) A pp-wave spacetime is said to be a {\it plane wave spacetime} if the amplitude
is constant in every wavefront and in this case the metric function can be written
\be
2H = A(u) y^2 + 2B(u) yz + C(u) z^2
\label{eq:hplanewave}
\ee
where $A$, $B$ and $C$ are arbitrary functions. 
A conformally flat pp-wave spacetime is necessarily a plane wave spacetime and has metric function given by
\be
2H = A(u) (y^2 + z^2)
\label{eq:hcfplanewave}
\ee
where $A$ is an arbitrary function.

We now state some preliminary geometrical results. 
A vector field $X$ is said to be a {\it conformal Killing vector field} (CKV) if and only if
\be
(\mathcal{L}_X g)_{ab} = 2 \phi g_{ab}
\label{eq:ckv}
\ee
where $\phi$ is some function of the coordinates ({\it conformal scalar}). When 
$\phi$ is not constant the CKV is said to be {\it proper}, and if $\phi_{;ab}=0$ the
CKV is a {\it special} CKV (SCKV). When $\phi$ is a constant, $X$ is a {\it homothetic vector field} (HKV) and
when the constant $\phi$ is non-zero $X$ is a {\it proper} HKV.
When $\phi = 0$, $X$ is a KV as mentioned above. The set of all CKV (respectively, SCKV, HKV and KV) form a finite
dimensional Lie algebra denoted by ${\cal C}$ (respectively, ${\cal S}$, ${\cal H}$ and ${\cal G}$).
Koutras \cite{koutras92} devised an algorithm to find KTs using CKVs and this algorithm
was generalized by Rani, Edgar and Barnes \cite{rani03}, \cite{edgar04}:
Given a pair of CKVs $X$, $Y$ satisfying
\[
(\mathcal{L}_X g)_{ab} = 2 \phi g_{ab}, \qquad (\mathcal{L}_Y g)_{ab} = 2 \psi g_{ab}
\]
then if the quantity $\phi Y_a + \psi X_a$ is a gradient, i.e., $\phi Y_a + \psi X_a = \Phi_{,a}$ for some scalar $\Phi$,
it follows that the tensor field
\[
L_{ab} = X_{(a} Y_{b)} - \Phi g_{ab}
\]
is a KT. Special cases are dealt with in the theorems and corollaries given in \cite{rani03}.
This algorithm does not, nor claims to, produce irreducible KTs in general.
We note that, for a KV $X$ and a KT $K$ it is straightforward to show that the tensor 
$(\mathcal{L}_X K)_{ab}$ is a KT. This can be regarded as following from the more
general result given in \cite{dolan89} where it is shown that the graded algebra of KTs has
the same structure constants as the corresponding Poisson bracket Lie algebra of first integrals.
We note that, for a pp-wave spacetime, it follows from (\ref{eqn:killingtensor}) and (\ref{eqn:kvk}) that
\[
\mathcal{L}_k K_{ab} = - \mathcal{L}_{\bar{k}} g_{ab}
\qquad
\mbox{where} \qquad \bar{k}_a = k^b K_{ab}.
\]

For details of the isometry and conformal algebras of the pp-wave spacetimes see \cite{ehlers62}, \cite{sippel86},
\cite{maartens91} and \cite{keanetupper04}. Here we note that the general plane wave spacetime (\ref{eq:hplanewave})
admits an ${\cal H}_6 \supset {\cal G}_5$ and the general conformally flat spacetime (\ref{eq:hcfplanewave}) admits
a ${\cal C}_{15} \supset {\cal H}_7 \supset {\cal G}_6$. In the conformally flat case the underlying conformal
algebra is the $so(4,2)$ conformal algebra of Minkowski spacetime.
Of particular relevance for this work are the plane wave spacetimes which admit additional symmetries, i.e.,
\be
2H = u^{-2} (a y^2 + 2 b yz + c z^2)
\label{eq:planewavesingular}
\ee
and
\be
2H = a y^2 + 2 b yz + c z^2
\label{eq:sgclass13}
\ee
where $a$, $b$ and $c$ are arbitrary constants (not all zero). Note that a transformation of the $y$ and $z$
coordinates allows us to set $b = 0$ in both (\ref{eq:planewavesingular}) and (\ref{eq:sgclass13}).
For arbitrary values of $a$, $b$ and $c$, the metric functions (\ref{eq:planewavesingular}) and
(\ref{eq:sgclass13}) are Sippel and Goenner classes 11 and 13 respectively and both admit a
${\cal H}_7 \supset {\cal G}_6$.
When $a = c$ and $b = 0$, (\ref{eq:planewavesingular}) and
(\ref{eq:sgclass13}) are Sippel and Goenner classes
16 and 17 respectively, are conformally flat and admit a ${\cal C}_{15} \supset {\cal H}_8 \supset {\cal G}_7$.
Plane wave spacetimes with metric function (\ref{eq:planewavesingular}) are {\it singular scale-invariant} plane waves
and those with metric function (\ref{eq:sgclass13}) are {\it symmetric} plane waves \cite{blau06}.

The Koutras algorithm is an indirect method to construct KTs and, as we have pointed out,
does not produce irreducible KTs in general. We are interested in irreducible KTs and particularly
those which cannot be obtained from the Koutras algorithm.
However, we shall now see that if a pp-wave spacetime
admits a homothety then the Koutras KT arises naturally. 

\begin{thm}\label{thm1}
A pp-wave spacetime which admits an HKV $Y$ will admit a KT, which will be irreducible in general.
This KT is obtained from $Y$ and the KV $k$ via the Koutras algorithm.
The only pp-wave spacetime for which this KT is reducible is the plane wave spacetime with metric function
(\ref{eq:planewavesingular}).
\end{thm}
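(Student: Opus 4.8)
The plan is to apply the Koutras algorithm to the pair consisting of the HKV $Y$ and the covariantly constant null KV $k$, and then determine exactly when the resulting KT fails to be irreducible. First I would write $(\mathcal{L}_Y g)_{ab} = 2\phi g_{ab}$ with $\phi$ a constant (WLOG $\phi \ne 0$, else $Y$ is a KV and the construction collapses), and $(\mathcal{L}_k g)_{ab} = 0$. The algorithm requires the one-form $\phi k_a + 0\cdot Y_a = \phi k_a$ to be a gradient; since $k_a = -\delta^u_a$ we have $\phi k_a = (-\phi u)_{,a}$, so $\Phi = -\phi u$ works and the algorithm delivers the KT
\[
L_{ab} = k_{(a} Y_{b)} - \Phi g_{ab} = k_{(a} Y_{b)} + \phi u\, g_{ab}.
\]
This establishes the first two sentences. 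The bulk of the work is the reducibility analysis.

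Next I would bring in the known structure of pp-wave homothety algebras (from \cite{maartens91}, \cite{keanetupper04}). A generic pp-wave admits only $k$ (and possibly $\partial_v$-type or translational KVs), so in general there are not enough KVs to express $L_{ab}$ in the reducible form (\ref{eqn:reducible}): one has to exhibit $L_{ab}$ as $c_0 g_{ab} + \sum c_{IJ} X_{I(a}X_{|J|b)}$ and note that the terms $X_{I(a}X_{|J|b)}$ built from the available KVs span too small a space — in particular $k_{(a}Y_{b)}$ cannot be reproduced because $Y$ is not a linear combination of KVs (it is a genuine homothety) and no symmetrized product of the few available KVs yields the required mixed structure. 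I would make this precise by writing out $L_{ab}$ in the coordinates $(u,v,y,z)$ using the explicit form of a pp-wave HKV, and comparing against the span of $\{g_{ab}\}\cup\{X_{I(a}X_{|J|b)}\}$ component by component; the obstruction will be a component (typically the $uv$ or $vy,vz$ slots) that the reducible combinations cannot match unless extra KVs are present.

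Then I would isolate the exceptional case. Extra KVs appear precisely for the highly symmetric plane waves, namely (\ref{eq:planewavesingular}) and (\ref{eq:sgclass13}), which admit $\mathcal{G}_6$ (or $\mathcal{G}_7$ in the conformally flat subcases). For the symmetric plane wave (\ref{eq:sgclass13}) the homothety is of a different character (the $u$-translation combines differently), and I expect that the candidate $L_{ab}$ there is still irreducible, or simply that no proper HKV of the relevant type exists — this needs checking against the classification. For the singular scale-invariant plane wave (\ref{eq:planewavesingular}), the proper HKV is essentially the scaling $Y = u\partial_u - v\partial_v$ (up to the $y,z$ rescalings), and here the enlarged $\mathcal{G}_6$ contains KVs — notably those generating the $u^{-2}$-adapted oscillator-type symmetries together with $k$ and $v$-translations — whose symmetrized products do reproduce $k_{(a}Y_{b)} + \phi u\,g_{ab}$. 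I would verify this by explicitly writing the KVs of class 11 from \cite{sippel86} and solving the linear system for the coefficients $c_0, c_{IJ}$, exhibiting the decomposition; this pins down (\ref{eq:planewavesingular}) as the unique reducible case.

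The main obstacle is the second step: proving \emph{irreducibility} in the general case requires a clean argument that the symmetrized KV products cannot span the direction $k_{(a}Y_{b)}$, which means I must control the full KV algebra of an arbitrary pp-wave (not just the plane wave subclass) and show $Y$ never lies in (nor contributes a product lying in) that span. Handling the borderline metric functions between "generic" and the exceptional class (i.e. class 11 vs. class 13, and their conformally flat specializations) carefully — so that the "only" in the statement is genuinely sharp — is the delicate part.
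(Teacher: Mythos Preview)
Your construction of $L_{ab}$ via the Koutras algorithm is correct and matches the paper exactly. The divergence is in the reducibility analysis, and here your organization is less efficient than the paper's, with a real risk of leaving the ``only'' claim incomplete.

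The paper does not split into a ``generic'' case plus a check of the special classes 11 and 13. Instead it takes the \emph{general parametric form} of a pp-wave HKV and of a pp-wave KV (equations (15) and (16) of \cite{keanetupper04}), writes down the most general reducible KT $c_0 g_{ab} + \sum c_{IJ} X_{I(a}X_{|J|b)}$ in terms of those parameters, and then sets this equal to $k_{(a}Y_{b)} + \psi u\, g_{ab}$. The component comparison does not merely exhibit an obstruction ``unless extra KVs are present''; it yields algebraic conditions on the metric function $H$ itself. Those conditions first force $H$ to be a plane wave (\ref{eq:hplanewave}), and then imposing the CKV equation (\ref{eq:ckv}) on the resulting HKV narrows $H$ to the singular scale-invariant form (\ref{eq:planewavesingular}). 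The ``only'' thus drops out of the equations rather than from an exhaustive case check.

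Your plan---show irreducibility for generic pp-waves by a dimension/span argument, then inspect the high-symmetry plane waves individually---could be pushed through, but it is vulnerable exactly where you flag it: there are many Sippel--Goenner classes between ``generic'' and classes 11/13/16/17, each with its own KV algebra, and you would have to rule them all out to justify ``only''. In particular, your hedge on the symmetric plane wave (\ref{eq:sgclass13}) (``still irreducible, or simply that no proper HKV of the relevant type exists'') is unnecessary---that spacetime does admit a proper HKV and $L_{ab}$ is indeed irreducible there (cf.\ Example~3)---but the fact that you have to hedge illustrates why the paper's direct derivation of the constraint on $H$ is the cleaner route.
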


\begin{proof}
Given the pairs $(k, \phi = 0)$, $( Y, \psi = constant)$ then $\phi Y_a + \psi k_a = - \psi \delta_a^u = \Phi_{,a}$,
i.e., $\Phi = -\psi u$ and the Koutras algorithm generates the KT
\be
L_{ab} = k_{(a} Y_{b)} + \psi u g_{ab}.
\label{eq:KoutrasKT-kH}
\ee
Reference \cite{keanetupper04} gives the expressions for the most general HKV and KV in a pp-wave
spacetime: The general HKV is given by $\alpha Z + X$ where $\alpha$ is a constant and
$Z$ and $X$ are given by equations (15) and (16) respectively in \cite{keanetupper04};
the general KV is given by $X$. Using these expressions one can write down the most general
reducible KT formed from a sum of the metric tensor $g_{ab}$ and the symmetrized product of
the general KV $X$, and comparison with (\ref{eq:KoutrasKT-kH}) leads one to the conclusion that
the metric function $H$ must have the form of a plane wave (\ref{eq:hplanewave}). Subsequent application
of the CKV equation (\ref{eq:ckv}) leads to a metric function of the form (\ref{eq:planewavesingular}).

\end{proof}

\begin{cor}\label{cor1}
All plane wave spacetimes admit a KT of Koutras type (\ref{eq:KoutrasKT-kH}) and,
with the exception of those with metric function (\ref{eq:planewavesingular}), this KT is irreducible.
\end{cor}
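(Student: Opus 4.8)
The plan is to deduce the corollary directly from Theorem~\ref{thm1}, using only the known structure of the homothety algebra of a plane wave spacetime. The first step is to recall that every plane wave spacetime --- i.e.\ every pp-wave with metric function of the form (\ref{eq:hplanewave}) --- admits a homothety: the general such spacetime carries an ${\cal H}_6 \supset {\cal G}_5$, and the various subclasses (such as (\ref{eq:planewavesingular}) and (\ref{eq:sgclass13})) carry strictly larger homothety algebras; in every case ${\cal H}$ strictly contains ${\cal G}$. Hence there exists a proper HKV $Y$ with $(\mathcal{L}_Y g)_{ab} = 2\psi g_{ab}$, $\psi$ a non-zero constant.

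The second step is simply to feed this $Y$, together with the covariantly constant null KV $k$ of (\ref{eqn:kvk}), into Theorem~\ref{thm1}. Since a plane wave spacetime is in particular a pp-wave spacetime admitting an HKV, the theorem applies verbatim and produces the Koutras-type KT
\[
L_{ab} = k_{(a} Y_{b)} + \psi u\, g_{ab},
\]
which is precisely of the form (\ref{eq:KoutrasKT-kH}). This establishes the first assertion.

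For the second assertion, observe that Theorem~\ref{thm1} already singles out the plane wave with metric function (\ref{eq:planewavesingular}) as the unique pp-wave spacetime for which $L_{ab}$ is reducible. Restricting that dichotomy to the subclass of plane wave spacetimes yields exactly the claim: for every plane wave spacetime other than (\ref{eq:planewavesingular}), the KT $L_{ab}$ is irreducible.

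The only point requiring a little care --- and the natural place for the argument to slip --- is the insistence that $Y$ be a \emph{proper} homothety, i.e.\ $\psi \neq 0$; if one inadvertently took $Y \in {\cal G}$, then $L_{ab}$ would collapse to $k_{(a} Y_{b)}$, a symmetrized product of two KVs and hence trivially reducible, so that the statement would be vacuous. This is not a genuine obstacle, however, precisely because the strict inclusions ${\cal H} \supset {\cal G}$ recorded in the first step guarantee the existence of a proper HKV for every plane wave spacetime, and the value of $\psi$ is immaterial beyond its being non-zero.
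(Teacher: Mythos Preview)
Your argument is correct and is exactly the intended derivation: the paper states the corollary immediately after Theorem~\ref{thm1} without a separate proof, since it follows at once from that theorem together with the fact (recorded earlier) that every plane wave spacetime admits an ${\cal H}_6 \supset {\cal G}_5$ and hence a proper HKV. Your care in insisting that $Y$ be a \emph{proper} homothety is well placed and makes explicit what the paper leaves implicit.
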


The general form of the KT components for a general pp-wave spacetime are given in the appendix.
Despite the large number and apparent complexity of the general equations we are able to obtain
explicit solutions for a selection of specific pp-wave spacetimes and these are presented in
section \ref{sec:ppwavespactimes}.
Some examples of plane wave spacetimes are given in section \ref{sec:planewavespacetimes}.
In section \ref{sec:cfplanewavespacetimes} we solve for the KTs explicitly for the conformally
flat plane wave spacetimes. The maximum number of independent irreducible KTs admitted by a conformally flat plane wave spacetime is shown to be six but some admit none at all.

\section{pp-wave spacetimes}
\label{sec:ppwavespactimes}
The full KT equations are given in the appendix. We present two examples.

\subsection*{Example 1}
The type $Biv$ pp-wave spacetime of \cite{keanetupper04} has metric
\[
ds^2 = -2dudv -2 l (\alpha z - \beta y)^{-2} du^2 + dy^2 + dz^2
\]
where $l$, $\alpha$ and $\beta$ are constants such that $\alpha^2 + \beta^2 \ne 0$.
A coordinate transformation puts this metric in the form
\[
ds^2 = -2dudv -2 z^{-2} du^2 + dy^2 + dz^2
\]
which is a 1+3 spacetime \cite{carot2008}. 
This spacetime admits a conformal algebra ${\cal S}_6 \supset {\cal H}_5 \supset {\cal G}_4$ with basis
\ba
X_1 = \partial_v, \qquad X_2 = \partial_u, \qquad X_3 = \partial_y, \qquad X_4 = y \partial_v + u \partial_y
\nonumber\\
X_5 = 2u \partial_u + y \partial_y + z \partial_z,
\qquad
X_6 = u^2 \partial_u + \case{1}{2} (y^2 + z^2) \partial_v + u (y \partial_y + z \partial_z).
\nonumber
\ea
Solving the KT equations (\ref{eq:ppwde2}) - (\ref{eq:ppwde10}) we find that there are 16 independent KTs corresponding to the 16 arbitrary constants of which five are irreducible KTs.
The irreducible KTs are
\ba
(K_1)_{ab} = -2 y^2 z^{-2} \delta_{(a}^u \delta_{b)}^u - z^2 \delta_{(a}^y \delta_{b)}^y
+ 2 yz \delta_{(a}^y \delta_{b)}^z - y^2 \delta_{(a}^z \delta_{b)}^z
\nonumber\\
(K_2)_{ab} = 2yz^{-2} \delta_{(a}^u \delta_{b)}^u - z \delta_{(a}^y \delta_{b)}^z
+ y \delta_{(a}^z \delta_{b)}^z
\nonumber\\
(K_3)_{ab} = 2 u yz^{-2} \delta_{(a}^u \delta_{b)}^u + z^2 \delta_{(a}^u \delta_{b)}^y
-yz \delta_{(a}^u \delta_{b)}^z - uz \delta_{(a}^y \delta_{b)}^z + uy \delta_{(a}^z \delta_{b)}^z
\nonumber\\
(K_4)_{ab} = 2 u z^{-2} \delta_{(a}^u \delta_{b)}^u - z \delta_{(a}^u \delta_{b)}^z
+ u \delta_{(a}^z \delta_{b)}^z
\nonumber\\
(K_5)_{ab} = (z^2 + 2u^2 z^{-2}) \delta_{(a}^u \delta_{b)}^u
- 2uz \delta_{(a}^u \delta_{b)}^z + u^2 \delta_{(a}^z \delta_{b)}^z.
\nonumber
\ea
We note that $K_2$, $K_4$ and $K_5$ can be derived from the Koutras algorithm (in combination 
with reducible KTs) whereas $K_1$ and $K_3$ cannot.

\subsection*{Example 2}
The metric of the isometry class 9 pp-wave spacetime of \cite{sippel86} can be written
\[
ds^2 = -2dudv -2 \exp(y) du^2 + dy^2 + dz^2.
\]
This spacetime admits a ${\cal G}_5$ and no irreducible KTs.

Thus there exists pp-wave spacetimes which admit no irreducible KTs.

\section{Plane wave spacetimes}
\label{sec:planewavespacetimes}
Let us consider the special case of the plane wave spacetime (\ref{eq:hplanewave}).
For an arbitrary $A$, $B$ and $C$ this spacetime admits a ${\cal H}_6 \supset {\cal G}_5$ with basis (see \cite{keanetupper04})
\ba
X_1 = \partial_v,
\qquad
X_2, X_3 = f_{,u} y \partial_v + f \partial_y,
\qquad
X_4, X_5 = g_{,u} z \partial_v + g \partial_z
\nonumber\\
X_6 = 2v \partial_v + y \partial_y + z \partial_z
\label{eq:planewavespacetimeh6basis}
\ea
where the functions $f$, $g$ satisfy
\ba
f_{,uu} + Af + Bg = 0, \qquad g_{,uu} + Bf + Cg = 0.
\label{eq:planewavekvconditions}
\ea
The $X_1, \ldots , X_5$ are KVs and $X_6$ is a proper HKV with $\phi = 1$.
The KT components for the general plane wave spacetimes are obtained from
the equations in the appendix with $H$ given by (\ref{eq:hplanewave}) and, as a consequence,
$\sigma = \zeta = \mu = \epsilon = 0$.
The Koutras KT arising from $X_1$ and $X_6$ is given by
\be
L_{ab} = 2(v - u H) \delta_{(a}^u \delta_{b)}^u
- y \delta_{(a}^u \delta_{b)}^y - z \delta_{(a}^u \delta_{b)}^z
-2u \delta_{(a}^u \delta_{b)}^v + u (\delta_{(a}^y \delta_{b)}^y + \delta_{(a}^z \delta_{b)}^z).
\label{eq:generalplanewavekoutraskt}
\ee
When $A = - C$ the spacetime is vacuum and when $A = C$ and $B=0$ the spacetime is conformally flat,
the latter being dealt with in section \ref{sec:cfplanewavespacetimes}.

\begin{thm}\label{thm:singularplanewave}
The singular scale-invariant plane wave spacetime given by (\ref{eq:planewavesingular}) in general
admits no irreducible KTs.  The only exception occurs in the case of the conformally flat plane wave
spacetime with metric function
\be
2H = \case{3}{16} u^{-2} (y^2 + z^2)
\label{eq:singularplanewave3over16}
\ee
in which case there are six independent irreducible KTs.
\end{thm}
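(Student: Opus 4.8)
The plan is to integrate the Killing tensor equations for (\ref{eq:planewavesingular}) and then to read off the irreducible content by comparison with the reducible tensors (\ref{eqn:reducible}). First I would use the coordinate freedom in the $(y,z)$ plane noted after (\ref{eq:sgclass13}) to take $b=0$, so that $2H=u^{-2}(ay^2+cz^2)$, and I would treat separately the non-conformally-flat case $a\ne c$ and the conformally flat case $a=c$. With $b=0$ the Killing vector conditions (\ref{eq:planewavekvconditions}) decouple into the Euler equations $f_{,uu}+au^{-2}f=0$ and $g_{,uu}+cu^{-2}g=0$, whose solutions are powers of $u$ with exponents $\frac{1}{2}(1\pm\sqrt{1-4a})$ and $\frac{1}{2}(1\pm\sqrt{1-4c})$; together with $X_1=\partial_v$ and the extra isometry $u\partial_u-v\partial_v$ — which is a Killing vector precisely because $H\propto u^{-2}$ — these span ${\cal G}_6$, while $X_6$ of (\ref{eq:planewavespacetimeh6basis}) remains a proper HKV. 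I would then list the reducible Killing tensors built via (\ref{eqn:reducible}) from $g_{ab}$ and the symmetrised products of these six Killing vectors; this span is the benchmark against which the general solution must be tested.

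For $a\ne c$ I would substitute $H$ into the appendix equations (\ref{eq:ppwde2})--(\ref{eq:ppwde10}), with $\sigma=\zeta=\mu=\epsilon=0$ since $H$ is a plane wave, and integrate them exactly as in the worked examples of Section \ref{sec:ppwavespactimes}: the dependence of each $K_{ab}$ on $v$ is polynomial of low order and is fixed by the equations, after which the dependence on $y$ and $z$ is likewise polynomial and is fixed, leaving a linear system of ordinary differential equations in $u$ of Euler type whose solutions are superpositions of powers $u^{p}$ with the exponents built from the four indicial roots above. I expect the constants surviving this procedure to reproduce exactly the reducible span of the first step, so that no irreducible Killing tensor occurs; in particular the Koutras tensor (\ref{eq:KoutrasKT-kH}) is reducible here, consistently with Theorem \ref{thm1}.

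For $a=c$ the spacetime is the conformally flat plane wave $2H=au^{-2}(y^2+z^2)$, one of the scale-invariant members of the family $2H=A(u)(y^2+z^2)$ whose Killing tensors are obtained completely in Section \ref{sec:cfplanewavespacetimes}. There the Killing tensor equations are shown to possess irreducible solutions only for special forms of $A(u)$; imposing $A\propto u^{-2}$ selects the single value giving the metric function (\ref{eq:singularplanewave3over16}), for which six independent irreducible Killing tensors survive, while every other conformally flat scale-invariant plane wave has none. Since $a=c=3/16$ is therefore the only exception to the conclusion reached for $a\ne c$, this proves the theorem.

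The main obstacle is the direct integration in the case $a\ne c$: one must control the full system (\ref{eq:ppwde2})--(\ref{eq:ppwde10}) and not merely the reduced $u$-ODEs, and then verify that every surviving tensor really is a combination of $g_{ab}$ and symmetrised Killing vectors. For the conformally flat branch the delicate point is the analysis of Section \ref{sec:cfplanewavespacetimes}: the integrability conditions of the Killing tensor system are nonlinear ordinary differential equations for $A(u)$, so isolating all of their solutions and checking that only $A=\frac{3}{16}u^{-2}$ is scale-invariant requires care. One must also remember that the Killing vector algebra enlarges at $a=c$ — a rotation enters, ${\cal G}_6$ becoming ${\cal G}_7$, and the full $so(4,2)$ becomes available conformally — so the reducible benchmark has to be recomputed there before the irreducible count is read off.
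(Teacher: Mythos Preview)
Your plan matches the paper's own proof: a direct (lengthy) integration of the appendix Killing--tensor system for the metric function (\ref{eq:planewavesingular}) to establish the generic statement, and an appeal to the conformally flat analysis of Section~\ref{sec:cfplanewavespacetimes} for the exceptional value and the count of six. Setting $b=0$ first and splitting on $a\ne c$ versus $a=c$ is a harmless organisational choice that the paper does not spell out but which is consistent with its calculation.

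One point you have not addressed that the paper does flag: the direct integration throws up a second scale-invariant profile, $2H=-\tfrac{3}{4}\,u^{-2}(y^{2}+z^{2})$, which also supports an irreducible Killing tensor but is excluded by the energy condition (\ref{eq:energycondition}). Your proposal should note this explicitly; otherwise your claim that ``imposing $A\propto u^{-2}$ selects the single value'' is not quite justified without the physical restriction $A\ge 0$.
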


\begin{proof}
A straightforward but lengthy calculation using (\ref{eq:planewavesingular}) in the KT equations
gives the general result.
The plane wave with metric function (\ref{eq:singularplanewave3over16}) arises in the
analysis of the general conformally flat plane wave spacetimes in section
\ref{sec:cfplanewavespacetimes} where the second part of the theorem is proved.
We note that the metric function $2H = - \case{3}{4} u^{-2}(y^2 + z^2)$ will admit an irreducible KT,
however we discard this solution because it does not satisfy the energy conditions (\ref{eq:energycondition}).
\end{proof}

\begin{cor}
\label{cor:singularscaleinvariantvacuumplanewave}
The singular scale-invariant vacuum plane wave spacetimes, i.e., those with metric function
\be
2H = \kappa u^{-2} (y^2 - z^2)
\ee
where $\kappa$ is a constant, admit no irreducible KTs.
\end{cor}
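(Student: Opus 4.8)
The plan is to exhibit the stated vacuum metric function as a member of the singular scale-invariant family (\ref{eq:planewavesingular}) and then read off the conclusion from Theorem~\ref{thm:singularplanewave}. First I would use the coordinate rotation in the $(y,z)$-plane remarked upon after (\ref{eq:sgclass13}) to bring (\ref{eq:planewavesingular}) to the form $2H = u^{-2}(a y^2 + c z^2)$. The vacuum condition $H_{,yy} + H_{,zz} = 0$ then reads $a + c = 0$; setting $a = \kappa$, $c = -\kappa$ reproduces exactly $2H = \kappa u^{-2}(y^2 - z^2)$, so this is a singular scale-invariant plane wave (indeed a special case of Sippel and Goenner class 11).

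Next I would invoke Theorem~\ref{thm:singularplanewave}: such spacetimes admit no irreducible KTs, the sole exception being the conformally flat case with metric function $2H = \case{3}{16} u^{-2}(y^2 + z^2)$. It then remains only to check that the vacuum subfamily never meets this exception. In the normalisation $b = 0$ the exceptional metric function has $a = c = \case{3}{16}$, whereas every vacuum member satisfies $a = -c$. The two conditions together force $a = c = 0$, i.e.\ $H \equiv 0$, which is flat and hence excluded by the definition of a pp-wave spacetime (non-flat). Therefore no singular scale-invariant vacuum plane wave coincides with the exceptional spacetime, and the corollary follows.

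I expect essentially no obstacle beyond Theorem~\ref{thm:singularplanewave} itself, whose proof rests on the lengthy direct integration of the KT equations alluded to there and which I am entitled to assume. The one point requiring a little care is that the reduction $b = 0$ used to state the theorem's exceptional case is the same reduction used to bring the vacuum function to the form $\kappa u^{-2}(y^2 - z^2)$, so that the comparison of the coefficients $a$ and $c$ between the two families is legitimate.
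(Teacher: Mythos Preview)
Your argument is correct and is precisely the intended one: the corollary is stated in the paper without proof because it is an immediate specialisation of Theorem~\ref{thm:singularplanewave}, and your check that the vacuum condition $a+c=0$ is incompatible with the exceptional conformally flat case $a=c=\case{3}{16}$ (unless $H\equiv 0$, which is excluded) is exactly the verification needed. Your remark about the $b=0$ reduction is sound but in fact unnecessary, since the vacuum condition $H_{,yy}+H_{,zz}=0$ on (\ref{eq:planewavesingular}) gives $a+c=0$ regardless of $b$, and this already contradicts $a=c=\case{3}{16}$.
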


\subsection*{Example 3}
The vacuum plane wave spacetime with $2H = y^2 - z^2$, i.e.,
\be
ds^2 =  - 2 du dv - (y^2 - z^2) du^2 + dy^2 + dz^2
\label{eq:example3}
\ee
admits an ${\cal H}_7 \supset {\cal G}_6$ composed of (\ref{eq:planewavespacetimeh6basis}) and the extra KV
\[
X_7 = \partial_u.
\]
In this case the independent solutions of (\ref{eq:planewavekvconditions}) are
\[
f_1 = \sin u, \qquad f_2 = \cos u, \qquad g_1 = \sinh u, \qquad g_2 = \cosh u.
\]
The solution of the KT equations involves 22 independent arbitrary constants.
However, there are only 21 independent reducible KTs: There are 21 symmetrized products of
the KVs and the metric tensor but the metric tensor is a linear combination of five of the
symmetrized products of KVs, i.e.,
\[
g_{ab} = -2 X_{1(a} X_{7b)} + X_{3(a} X_{3b)} + X_{2(a} X_{2b)}
+ X_{4(a} X_{4b)} - X_{5(a} X_{5b)}.
\]
The irreducible KT is the Koutras KT arising from the KV $X_1 = k = \partial_v$ and the HKV
$X_6$, i.e.,
\[
L_{ab} = [2v - u (y^2 - z^2)] \delta_{(a}^u \delta_{b)}^u
- y \delta_{(a}^u \delta_{b)}^y - z \delta_{(a}^u \delta_{b)}^z
-2u \delta_{(a}^u \delta_{b)}^v + u (\delta_{(a}^y \delta_{b)}^y + \delta_{(a}^z \delta_{b)}^z).
\]

\section{Conformally flat plane wave spacetimes}
\label{sec:cfplanewavespacetimes}
These spacetimes have metric function given by (\ref{eq:hcfplanewave}).
Since we are primarily interested in irreducible KTs and whether
the KTs can be obtained from the Koutras algorithm, we begin by writing down a basis for
the CKV of this spacetime and, where appropriate, the non-zero conformal scalars $\phi$.
The covariant components of the CKVs are given in terms of the functions $f_1$ and $f_2$
which are two independent solutions of
\ba
f_{,uu} + Af = 0.
\label{eq:fdiffuu}
\ea
The components are as follows
\ba
X_{1a} = - \delta^u_a,
\qquad
X_{2a} = z \delta^y_a - y \delta^z_a,
\nonumber\\
X_{3a} = -f_{1,u} y \delta^u_a + f_1 \delta^y_a,
\qquad
X_{4a} = -f_{2,u} y \delta^u_a + f_2 \delta^y_a
\nonumber\\
X_{5a} = -f_{1,u} z \delta^u_a + f_1 \delta^z_a,
\qquad
X_{6a} = -f_{2,u} z \delta^u_a + f_2 \delta^z_a
\nonumber\\
X_{7a} = -2v \delta^u_a + y \delta^y_a + z \delta^z_a, \qquad \phi = 1
\nonumber\\
X_{8a} = [- \case{1}{4} A (y^2 + z^2)^2 - v^2] \delta^u_a - \case{1}{2} (y^2 + z^2) \delta^v_a + v (y \delta^y_a + z \delta^z_a),
\qquad \phi = v
\nonumber\\
\fl X_{9a} = [-\case{1}{2} A f_1 y (y^2 + z^2) - f_{1,u} vy] \delta^u_a - f_1 y \delta^v_a
+ [\case{1}{2} f_{1,u} (y^2 - z^2) + f_1 v]\delta^y_a + f_{1,u} yz\delta^z_a
\nonumber\\
\phi = f_{1,u} y
\nonumber\\
\fl X_{10a} = [-\case{1}{2} A f_2 y (y^2 + z^2) - f_{2,u} vy] \delta^u_a - f_2 y \delta^v_a
+ [\case{1}{2} f_{2,u} (y^2 - z^2) + f_2 v]\delta^y_a + f_{2,u} yz\delta^z_a
\nonumber\\
\phi = f_{2,u} y
\nonumber\\
\fl X_{11a} = [-\case{1}{2} A f_1 z (y^2 + z^2) - f_{1,u} vz] \delta^u_a - f_1 z \delta^v_a
+ f_{1,u} yz\delta^y_a + [\case{1}{2} f_{1,u} (z^2 - y^2) + f_1 v]\delta^z_a
\nonumber\\
\phi = f_{1,u} z
\nonumber\\
\fl X_{12a} = [-\case{1}{2} A f_2 z (y^2 + z^2) - f_{2,u} vz] \delta^u_a - f_2 z \delta^v_a
+ f_{2,u} yz\delta^y_a + [\case{1}{2} f_{2,u} (z^2 - y^2) + f_2 v]\delta^z_a
\nonumber\\
\phi = f_{2,u} z
\nonumber\\
\fl X_{13a} = -\case{1}{2} (A {f_1}^2 + {f_{1,u}}^2) (y^2 + z^2) \delta^u_a - {f_1}^2 \delta^v_a
+ f_1 f_{1,u} (y \delta^y_a + z \delta^z_a),
\qquad \phi = f_1 f_{1,u}
\nonumber\\
\fl X_{14a} = -\case{1}{2} (A {f_2}^2 + {f_{2,u}}^2) (y^2 + z^2) \delta^u_a - {f_2}^2 \delta^v_a
+ f_2 f_{2,u} (y \delta^y_a + z \delta^z_a),
\qquad \phi = f_2 f_{2,u}
\nonumber\\
\fl X_{15a} = -\case{1}{2}(A f_1 f_2 + f_{1,u} f_{2,u}) (y^2 + z^2) \delta^u_a - f_1 f_2 \delta^v_a
+ \case{1}{2} (f_{1,u} f_2 + f_1 f_{2,u}) (y \delta^y_a + z \delta^z_a)
\nonumber\\
\phi = \case{1}{2}(f_1 f_{2,u} + f_2 f_{1,u}).
\nonumber
\ea
Thus, for a conformally flat plane wave spacetime, the conformal symmetries depend only
on the independent solutions of the differential equation (\ref{eq:fdiffuu}).
$X_1, \ldots , X_6$ are KVs, $X_7$ is a proper HKV and, in general, $X_8, \ldots , X_{15}$ are 
proper CKV.

There are some special cases of note as identified by Sippel and Goenner \cite{sippel86}:
When $A$ is constant, $X_{15}$ is a KV, which we shall denote as $Z = \partial_u$ and
\[
Z_a = -A (y^2 + z^2) \delta^u_a - \delta^v_a.
\]
When $A = \kappa u^{-2}$, there is also an extra KV: For $\kappa < 1/4$, $X_{15}$ is an HKV;
For $\kappa = 1/4$, $X_{13}$ is an HKV; For $\kappa > 1/4$, $X_{13} + X_{14}$ is an HKV.
In each case taking a linear combination of the HKV with $X_7$ allows us to replace the HKV
with the KV $Y = u \partial_u - v \partial_v$. The covariant components of $Y$ are
\[
Y_a = [v - \kappa u^{-1} (y^2 + z^2)] \delta^u_a - u \delta^v_a.
\]
The KT components for the conformally flat plane wave spacetimes are obtained from
the equations in the appendix with $H$ given by (\ref{eq:hcfplanewave}) and, as a consequence,
$\sigma = \zeta = \mu = \epsilon = 0$.
In the case of the conformally flat plane wave spacetimes the KT equations separate into independent groups.
Most of the groups lead only to reducible KTs but the following five groups lead to irreducible KTs.

\begin{enumerate}

\item Equations involving $\rho$, $\Psi$, $\Lambda$ only
\ba
\rho + \Psi + \Lambda = 0
\label{eq:cfpw1}\\
\rho_{,uu} = \Psi_{,uu}
\label{eq:cfpw2}\\
\rho_{,uuu} = - A_{,u} (\rho + \Psi) - 2A (\rho_{,u} + \Psi_{,u})
\\
3 \rho_{,uu} + 2A (\rho + \Psi) = 0
\label{eq:cfpw4}\\
8A \rho_{,u} + 3 A_{,u} \rho - A_{,u} \Psi = 0
\label{eq:cfpw5}\\
8A \Psi_{,u} + 3 A_{,u} \Psi - A_{,u} \rho = 0
\label{eq:cfpw6}\\
10A \rho_{,uu} + A_{,u} \rho_{,u} + 4A_{,u} \Psi_{,u} + A_{,uu} \Psi + 4A^2 (\rho + \Psi) = 0
\\
2 A \rho_{,uu} - 7 A_{,u} \rho_{,u} + 2A_{,u} \Psi_{,u}  + A_{,uu} (\Psi - 2 \rho) + 4A^2 (\rho + \Psi) = 0.
\label{eq:cfpw8}
\ea

\item Equations involving $\tau$, $\omega$, $\theta$, $\pi$ only
\ba
\theta_{,uuu} = 0
\label{eq:cfpw9}
\\
3 \pi_{,u} = 2A(\tau + \omega)
\label{eq:cfpw10}
\\
\tau_{,uuu} = -2A_{,u} \tau - 4A\tau_{,u} - A_{,u} \theta_{,u} - 2A \theta_{,uu}
\label{eq:cfpw11}\\
\omega_{,uuu} = -2A_{,u} \omega - 4A \omega_{,u} - A_{,u} \theta_{,u} - 2A \theta_{,uu}
\\
3 \tau_{,uu} + 4A \tau = 2 A_{,u} \theta + 2A \theta_{,u}
\label{eq:cfpw13}\\
3 \omega_{,uu} + 4A \omega = 2 A_{,u} \theta + 2A \theta_{,u}
\label{eq:cfpw14}\\
24A (A \theta_{,u} + A_{,u} \theta) = 16 A \tau_{,uu} + 5 A_{,u} \tau_{,u} + A_{,uu} \tau + 16 A^2 \tau
\label{eq:cfpw15}\\
8A (A \theta_{,u} + A_{,u} \theta) = 2A (\tau + \omega)_{,uu} - \case{1}{2} A_{,u} (\tau + \omega)_{,u}
\nonumber\\
- \case{1}{2} A_{,uu} (\tau + \omega) + \pi_{,uuu} + 4 A \pi_{,u}
\\
4A(\tau - \omega)_{,u} + A_{,u} (\tau - \omega) = 0
\label{eq:cfpw17}\\
24A (A \theta_{,u} + A_{,u} \theta) = 16 A \omega_{,uu} + 5 A_{,u} \omega_{,u} + A_{,uu} \omega + 16 A^2 \omega.
\label{eq:cfpw18}
\ea

\item Equations involving $\xi$, $\Sigma$, $q$ and $s$
\ba
\xi_{,uuu} + 2A_{,u} \xi + 4A \xi_{,u} + 2A_{,u} q + 4 A q_{,u} = 0
\label{eq:cfpw22}\\
\Sigma_{,uuu} + 2A_{,u} \Sigma + 4A \Sigma_{,u} + 2A_{,u} q + 4A q_{,u} = 0
\label{eq:cfpw23}\\
q = -\case{1}{2} \alpha u + \beta
\label{eq:cfpw24}
\\
s = \alpha = \mbox{constant}.
\ea

\item Equations involving $l$, $\Gamma$ and $\Omega$ only
\ba
l_{,uu} + Al = 0
\label{eq:cfpw28}
\\
3 \Omega_{,uu} + A(3 \Omega - 4l) + 2 A_{,u} \Gamma = 0
\label{eq:cfpw29}\\
\Omega_{,uuu} + A_{,u} \Omega + A \Omega_{,u} + 2A \Gamma_{,uu} + 2A_{,u} \Gamma_{,u} + A_{,uu} \Gamma
\nonumber\\
+ 2A^2 \Gamma - A_{,u} l = 0
\label{eq:cfpw30}\\
2l_{,u} + \Gamma_{,uu} + A \Gamma = 0
\label{eq:cfpw31}\\
A_{,u} l + 4 Al_{,u} + 6A \Gamma_{,uu} + 4A_{,u} \Gamma_{,u} + A_{,uu} \Gamma + 6A^2 \Gamma = 0.
\label{eq:cfpw32}
\ea

\item Equations involving $h$, $\nu$ and $\chi$ only. These equations have the same form as those
in the previous group.

\end{enumerate}

We consider each group in turn:

\begin{enumerate}

\item Equations (\ref{eq:cfpw5}) and (\ref{eq:cfpw6}) give
\be
\rho = \rho_0 A^{-1/4} + \Psi_0 A^{-1/2},
\qquad
\Psi = \rho_0 A^{-1/4} - \Psi_0 A^{-1/2}
\ee
where $\rho_0$ and $\Psi_0$ are constants, and equation (\ref{eq:cfpw2}) becomes
\[
\Psi_0 (A^{-1/2})_{,uu} = 0
\]
so that either $\Psi_0 = 0$ and $A$ is an arbitrary non-zero function, or 
$\Psi_0$ is an arbitrary constant and $A = \kappa$ or $\kappa u^{-2}$, where $\kappa$ is an
arbitrary non-zero constant. Equation (\ref{eq:cfpw4})
becomes
\ba
\rho_0 A^{-9/4} & (15 {A_{,u}}^2 - 12 A A_{,uu} + 64 A^3)
\nonumber\\
& + 12 \Psi_0 A^{-5/2} (3 A_{,u}^2 - 2 A A_{,uu}) = 0
\ea
and, on account of the above conditions on $A$ and $\Psi_0$, the second term vanishes and we have either $\rho_0 = 0$ or
\be
12 A A_{,uu} - 15 {A_{,u}}^2 - 64 A^3 = 0.
\label{eq:cfpwcasei1}
\ee
This equation integrates to give
\[
A^{-5/2} {A_{,u}}^2 = \case{64}{3} A^{1/2} + \eta
\]
where $\eta$ is an arbitrary constant and integration of this equation gives
\be
A=\cases{
\case{3}{16} u^{-2} &for $\eta = 0$
\\
(u^2 - \case{4}{3})^{-2} &for $\eta \ne 0$.
\\}
\label{eq:cfpwcasei2}
\ee
There are only two cases satisfying equations (\ref{eq:cfpw1}) - (\ref{eq:cfpw8})
and leading to an irreducible KT:
\begin{enumerate}

\item $A = \case{3}{16} u^{-2}$, $\Psi_0 = 0$ and $\rho_0$ arbitrary.

\item $A = (u^2 - \case{4}{3})^{-2}$, $\Psi_0 = 0$ and $\rho_0$ arbitrary.

\end{enumerate}
In both cases the irreducible KT is given by
\ba
(K_1)_{ab} &= [-\case{4}{3} A^{3/4} yzv + \case{1}{2} A_{,u} A^{-1/4} yz (y^2 + z^2)] \delta_{(a}^u \delta_{b)}^u
\nonumber\\
&+ A^{-5/4} A_{,u} yz \delta_{(a}^u \delta_{b)}^v
\nonumber\\
&+ [- \case{1}{2} A^{-5/4} A_{,u} zv + \case{2}{3} A^{3/4} z (3y^2 + z^2)] \delta_{(a}^u \delta_{b)}^y
\nonumber\\
&+ [- \case{1}{2} A^{-5/4} A_{,u} yv + \case{2}{3} A^{3/4} y (y^2 + 3z^2)] \delta_{(a}^u \delta_{b)}^z
\nonumber\\
&+ 2 A^{-1/4} z \delta_{(a}^v \delta_{b)}^y
+ 2 A^{-1/4} y \delta_{(a}^v \delta_{b)}^z
- 4 A^{-1/4} v \delta_{(a}^y \delta_{b)}^z.
\nonumber
\ea

\item Equation (\ref{eq:cfpw9}) gives
\[
\theta = \theta_1 u^2 + \theta_2 u + \theta_3
\]
where $\theta_1$, $\theta_2$ and $\theta_3$ are constants.
From equation (\ref{eq:cfpw17}) we obtain
\be
\tau - \omega = a_0 A^{-1/4}
\label{eq:cfpwcaseii1}
\ee
where $a_0$ is a constant. Equations (\ref{eq:cfpw13}), (\ref{eq:cfpw14}) and (\ref{eq:cfpwcaseii1}) give
\be
a_0 (12 A A_{,uu} - 15 {A_{,u}}^2 - 64 A^3) = 0.
\label{eq:cfpwcaseii2}
\ee
Using equations (\ref{eq:cfpw13}) and (\ref{eq:cfpw15}), $\theta$ can be eliminated to give
\be
20 A \tau_{,uu} - 5 A_{,u} \tau_{,u} + (32 A^2 - A_{,uu}) \tau = 0
\label{eq:cfpwcaseii3}
\ee
and an identical equation can be derived for $\omega$. Using combinations of equations (\ref{eq:cfpw11}),
(\ref{eq:cfpw13}), (\ref{eq:cfpw15}) and their derivatives we find
\ba
\theta_1 (32 A A_{,u} u^2 + 64 A^2 u + 2A_{,uuu} u^2 + 18 A_{,uu} u + 30 A_{,u})
\nonumber\\
+ \theta_2 (32 A A_{,u} u + 32 A^2 + 2A_{,uuu} u + 9 A_{,uu})
\nonumber\\
+ \theta_3 (32 A A_{,u} + 2 A_{,uuu}) = 0.
\label{eq:cfpwcaseii4}
\ea
These equations can also be used to obtain
\ba
\fl
\case{2}{5} (12 A A_{,uu} - 15 {A_{,u}}^2 - 64 A^3) \tau
=
\nonumber\\
\theta_1 (6 A_{,u} A_{,uu} u^2 - 4A A_{,uuu} u^2 + 42 {A_{,u}}^2 u - 36 A A_{,uu} u - 12 A A_{,u})
\nonumber\\
+ \theta_2 (6 A_{,u} A_{,uu} u - 4A A_{,uuu} u + 21 {A_{,u}}^2 - 18 A A_{,uu})
\nonumber\\
+ \theta_3 (6 A_{,u} A_{,uu} - 4A A_{,uuu})
\label{eq:cfpwcaseii5}
\ea
and an identical equation can be derived for $\omega$.
We note that each of $\theta_1$, $\theta_2$ and $\theta_3$ are either zero or arbitrary constants.
There are two cases to consider: $a_0 \ne 0$ and $a_0 = 0$.
If $a_0 \ne 0$ then (\ref{eq:cfpwcaseii2}) reduces to equation (\ref{eq:cfpwcasei1}) which has
solutions (\ref{eq:cfpwcasei2}).
In the case $A = \case{3}{16} u^{-2}$, equations (\ref{eq:cfpw9}) - (\ref{eq:cfpw18}) and
(\ref{eq:cfpwcaseii1}) give $\theta_3 = 0$ and
\ba
\tau = \tau_1 u^{1/2} - \case{1}{2} \theta_2,
\qquad
&\omega = \omega_1 u^{1/2} - \case{1}{2} \theta_2
\nonumber\\
\theta = \theta_1 u^2 + \theta_2 u,
&\pi = - \case{1}{4} (\tau_1 + \omega_1) u^{-1/2} + \case{1}{8} \theta_2 u^{-1} + \pi_0
\nonumber
\ea
where $\theta_1$, $\theta_2$, $\tau_1$, $\omega_1$ and $\pi_0$ are arbitrary constants and
$\omega_1 = \tau_1 - 2 a_0 / \sqrt[4]{3}$.
The KTs associated with the constants $\pi_0$ and $\theta_1$ are reducible.
However, the KTs associated with the constants $\tau_1$, $\omega_1$ and $\theta_2$ are irreducible and given by,
respectively
\ba
(K_2)_{ab} &= [-\case{1}{8} u^{-3/2} v y^2 - \case{3}{32} u^{-5/2} y^2(y^2 + z^2)] \delta_{(a}^u \delta_{b)}^u
\nonumber\\
&- u^{-1/2} y^2 \delta_{(a}^u \delta_{b)}^v
+ \case{1}{8} u^{-3/2} y^2 z \delta_{(a}^u \delta_{b)}^z
\nonumber\\
&+ (u^{-1/2} vy + \case{1}{4} u^{-3/2} y^3 + \case{1}{8} u^{-3/2} yz^2) \delta_{(a}^u \delta_{b)}^y
\nonumber\\
&+ 2 u^{1/2} y \delta_{(a}^v \delta_{b)}^y
+ (-2u^{1/2} v + \case{1}{4} u^{-1/2} z^2) \delta_{(a}^y \delta_{b)}^y
\nonumber\\
&- \case{1}{2} u^{-1/2} yz \delta_{(a}^y \delta_{b)}^z
+ \case{1}{4} u^{-1/2} y^2 \delta_{(a}^z \delta_{b)}^z
\nonumber
\\
(K_3)_{ab} &= \{ \hbox{as above with} \: y \leftrightarrow z \}
\nonumber
\\
(K_4)_{ab} &= [-\case{3}{16} u^{-2} v (y^2 + z^2) + \case{7}{128} u^{-3} (y^2 + z^2)^2] \delta_{(a}^u \delta_{b)}^u
\nonumber\\
&+ [-2v + \case{3}{4} u^{-1} (y^2 + z^2)] \delta_{(a}^u \delta_{b)}^v
+ 2u \delta_{(a}^v \delta_{b)}^v
\nonumber\\
&- \case{1}{8} u^{-2} y (y^2 + z^2) \delta_{(a}^u \delta_{b)}^y
- \case{1}{8} u^{-2} z (y^2 + z^2) \delta_{(a}^u \delta_{b)}^z
\nonumber\\
&- y \delta_{(a}^v \delta_{b)}^y
- z \delta_{(a}^v \delta_{b)}^z
+ \case{1}{4} u^{-1} yz \delta_{(a}^y \delta_{b)}^z
\nonumber\\
&+ (v - \case{1}{8} u^{-1} z^2) \delta_{(a}^y \delta_{b)}^y
+ (v - \case{1}{8} u^{-1} y^2) \delta_{(a}^z \delta_{b)}^z.
\nonumber
\ea
We will now consider the case $A = (u^2 - \case{4}{3})^{-2}$. 
Inserting the function $A$ into (\ref{eq:cfpwcaseii4}) we find that $\theta_1 = \theta_2 = \theta_3 = 0$ and equations
(\ref{eq:cfpw10}) - (\ref{eq:cfpw18}) give
\ba
\tau = \tau_1 (u^2 - \case{4}{3})^{1/2},
\qquad
\omega = \omega_1 (u^2 - \case{4}{3})^{1/2}
\nonumber\\
\pi = \pi_0 - \case{1}{2} (\tau_1 + \omega_1) u (u^2 - \case{4}{3})^{-1/2}.
\ea
The KT associated with the constant $\pi_0$ is reducible and the KTs associated with
the constants $\tau_1$ and $\omega_1$ are irreducible and given by, respectively
\ba
(K_5)_{ab} &=
\{
-\case{2}{3} (u^2 - \case{4}{3})^2 v
\nonumber\\
&+ \case{4}{3}
[\case{4}{3} y^2 + \case{1}{4} u (u^2 - \case{4}{3}) (y^2 - 3z^2)]
\} (u^2 - \case{4}{3})^{-7/2} y^2 \delta_{(a}^u \delta_{b)}^u
\nonumber\\
&- 2u (u^2 - \case{4}{3})^{-1/2} y^2 \delta_{(a}^u \delta_{b)}^v
+ \case{8}{3} (u^2 - \case{4}{3})^{-3/2} y^2 z \delta_{(a}^u \delta_{b)}^z
\nonumber\\
&+ 2 [u (u^2 - \case{4}{3}) v + \case{1}{3} (2y^2 + z^2)]
(u^2 - \case{4}{3})^{-3/2} y \delta_{(a}^u \delta_{b)}^y
\nonumber\\
&+ 2 (u^2 - \case{4}{3})^{1/2} y \delta_{(a}^v \delta_{b)}^y
\nonumber\\
&+[-2 (u^2 - \case{4}{3})^{1/2} v + \case{1}{2} u(u^2 - \case{4}{3})^{-1/2} z^2] \delta_{(a}^y \delta_{b)}^y
\nonumber\\
&+ \case{1}{2} u (u^2 - \case{4}{3})^{-1/2}
[ y^2 \delta_{(a}^y \delta_{b)}^y - 2 yz \delta_{(a}^y \delta_{b)}^z]
\nonumber
\\
(K_6)_{ab} &= \{ \hbox{as above with} \: y \leftrightarrow z \}.
\nonumber
\ea
We now consider the case $a_0 = 0$, i.e., $\tau = \omega$.
We note that $K_4$ corresponds to this case. For a given $A(u)$,
it is necessary to first solve (\ref{eq:cfpwcaseii3}), then (\ref{eq:cfpwcaseii4}), insert
the $\tau$ and $\theta$ into (\ref{eq:cfpwcaseii5}) and equations (\ref{eq:cfpw9}) - (\ref{eq:cfpw18})
must be checked for consistency. The function $\pi$ can then be solved for using equation (\ref{eq:cfpw10}),
i.e.,
\be
\pi_{,u} = \case{4}{3} A \tau
\ee
and the KT associated with the constant $\pi_0$ arising from this integration is reducible.
We make a number of observations for the case $a_0 = 0$:
Not all of the constants $\theta_1$, $\theta_2$, $\theta_3$ can be arbitrary since equation
(\ref{eq:cfpwcaseii4}) would then require $A = 0$ and it follows that no more than two
irreducible KTs can arise from the set of equations (ii);
There will be no irreducible KTs arising from the set of equations (ii) for an arbitrary
metric function $A$ since (\ref{eq:cfpwcaseii3}), (\ref{eq:cfpwcaseii4}) and (\ref{eq:cfpwcaseii5})
require $\tau = \theta = \pi_{,u} = 0$.

\item In this case the only irreducible KT is the Koutras KT
\[
L_{ab} = X_{1(a} X_{7b)} + u g_{ab}.
\]
As stated in Corollary \ref{cor1}, $L_{ab}$ is reducible when $A = \kappa u^{-2}$.

\item Equation (\ref{eq:cfpw28}) yields
\[
l = l_1 f_1 + l_2 f_2
\]
where $l_1$ and $l_2$ are arbitrary constants. Equation (\ref{eq:cfpw31}) now reads
\[
\Gamma_{,uu} + A \Gamma = - 2l_1 f_{1,u} - 2l_2 f_{2,u} 
\]
and the general solution of this is
\[
\Gamma = \Gamma_1 f_1 + \Gamma_2 f_2 - l_1 u f_1 - l_2 u f_2
\]
where $\Gamma_1$ and $\Gamma_2$ are arbitrary constants. As a result, equation (\ref{eq:cfpw29}) 
gives
\[
\Omega_{,uu} + A \Omega = \case{4}{3} A (l_1 f_1 + l_2 f_2) - \case{2}{3} A_{,u}
(\Gamma_1 f_1 + \Gamma_2 f_2 - l_1 u f_1 - l_2 u f_2)
\]
and the general solution of this equation is
\[
\Omega = \Omega_1 f_1 + \Omega_2 f_2 + \case{2}{3} \Gamma_1 f_{1,u} + \case{2}{3} \Gamma_2 f_{2,u}
- \case{2}{3} l_1 u f_{1,u} - \case{2}{3} l_2 u f_{2,u} 
\]
where $\Omega_1$ and $\Omega_2$ are arbitrary constants. As a result equations (\ref{eq:cfpw31}) and (\ref{eq:cfpw32}) yield
\ba
4 (l_1 f_1 + l_2 f_2)_{,u} (2A + u A_{,u}) + (l_1 f_1 + l_2 f_2) (2A + u A_{,u})_{,u}
\nonumber\\
- 4 (\Gamma_1 f_1 + \Gamma_2 f_2)_{,u} A_{,u} - (\Gamma_1 f_1 + \Gamma_2 f_2) A_{,uu} =0.
\label{eq:cfviiimain equation}
\ea
We have been unable to find the general solution to this equation. However, we note that
if $\Gamma_1 = \Gamma_2 = 0$ the equation yields either $A = \kappa u^{-2}$ where $\kappa$
is an arbitrary constant, resulting in only reducible KTs, or
$l = l_1 f_1 + l_2 f_2 = \alpha |2A + u A_{,u}|^{-1/4}$ where $\alpha$ is an arbitrary constant.
In this case equation (\ref{eq:cfpw28}) gives
\be
5 {X_{,u}}^2 - 4 X X_{,uu} + 16 AX^2 = 0
\qquad
\mbox{where}
\qquad
X = 2A + u A_{,u}.
\label{eq:cfpwviiibconditionA}
\ee
If a function $A(u)$ satisfying this differrential equation can be found then the
corresponding KT is irreducible and is given by
\ba
(K_7)_{ab} &= \{-2 l_{,u} yv + [\case{1}{3} (2A + u A_{,u}) l
- 2 u A l_{,u}] y(y^2 + z^2) \} \delta_{(a}^u \delta_{b)}^u
\nonumber\\
&+ 2 u l_{,u} y \delta_{(a}^u \delta_{b)}^v
+ 2 [l v - \case{1}{3} u A l (3y^2 + 2z^2) - \case{1}{3} l_{,u} z^2] \delta_{(a}^u \delta_{b)}^y
\nonumber\\
&+ \case{2}{3} (l_{,u} - u A l) \delta_{(a}^u \delta_{b)}^z
- 2 u l \delta_{(a}^v \delta_{b)}^y
+ \case{2}{3} u l_{,u} z \delta_{(a}^y \delta_{b)}^z
\nonumber\\
&- \case{2}{3} u l_{,u} y \delta_{(a}^z \delta_{b)}^z.
\nonumber
\ea
If instead, $l_1 = l_2 = 0$, equation (\ref{eq:cfviiimain equation}) yields either $A_{,u} = 0$,
which leads only to reducible KTs, or
$\Gamma = \Gamma_1 f_1 + \Gamma_2 f_2 = \alpha |A_{,u}|^{-1/4}$.
Equation (\ref{eq:cfpw31}) then gives
\be
5 {A_{,uu}}^2 - 4 A_{,u} A_{,uuu} + 16 A A_{,u}^2 = 0.
\label{eq:cfpwviiibconditionA2}
\ee
Any function $A(u)$ satisfying this equation will have a corresponding irreducible KT given by
\ba
(K_8)_{ab} &= - (2A \Gamma_{,u} + \case{1}{3} A_{,u} \Gamma) y (y^2 + z^2) \delta_{(a}^u \delta_{b)}^u
- 2 \Gamma_{,u} y \delta_{(a}^u \delta_{b)}^v
\nonumber\\
&
+ \case{2}{3} A \Gamma (3y^2 + 2 z^2) \delta_{(a}^u \delta_{b)}^y
+ \case{2}{3} A \Gamma yz \delta_{(a}^u \delta_{b)}^z
+ 2 \Gamma \delta_{(a}^v \delta_{b)}^y
\nonumber\\
&- \case{2}{3} \Gamma_{,u} z \delta_{(a}^y \delta_{b)}^z
+ \case{2}{3} \Gamma_{,u} y \delta_{(a}^z \delta_{b)}^z.
\nonumber
\ea
We have found two solutions to (\ref{eq:cfpwviiibconditionA2}), namely:

\begin{enumerate}

\item $A(u) = \case{3}{16} u^{-2}$. This is the only function of the form
$A(u) = \kappa u^{-2}$ that satisfies equation (\ref{eq:cfpwviiibconditionA2}).
The corresponding irreducible KT is $K_8$ with $f_1 = u^{3/4}$, $f_2 = u^{1/4}$,
$\Gamma_2 = 0$. 

\item $A(u) = -[\case{1}{4} + W(-e^{-u})][1 + W(-e^{-u})]^{-4}$ where $W(x)$ is the
Lambert W-function \cite{corless96}. The corresponding irreducible KT is $K_8$
with $f_1$, $f_2$ given by
\[
{f_1}^2 = -(1 + W) W^{-1}, \qquad {f_2}^2 = -(1 + W) W.
\]
Now $A(u) > 0$ for the energy conditions to hold and ${f_1}^2$, ${f_2}^2$ are each
positive so it follows that the solution is valid on the principle branch of the
W-function in the interval $-1 < W(-e^{-u}) < -1/4$ so that $u$ is confined to the 
interval $1 < u < 1.615$, approximately.

\end{enumerate}

\item By symmetry with the case above, the functions $h$, $\nu$, $\chi$ lead to the
same solutions $A(u)$ but with new irreducible KTs, $K_{9}$ and $K_{10}$, that
can be derived from $K_7$ and $K_8$ by interchanging the coordinates $y$ and $z$.

\end{enumerate}

The results of this section can be summarized in the following theorem.

\begin{thm}\label{thm2}
An arbitrary conformally flat plane wave spacetime will admit only one irreducible KT,
and that KT is the Koutras KT $L_{ab}$. Special subcases exist:

\begin{enumerate}

\item A(u) = $\kappa u^{-2}$, $\kappa = constant$, $\kappa \ne 3/16$. In this case the
Koutras KT $L_{ab}$ is reducible and the spacetime admits no irreducible KTs.

\item A(u) = $(3/16) u^{-2}$. In this case the Koutras KT $L_{ab}$ is reducible and there exist
six independent irreducible KTs, $K_1$, $K_2$, $K_3$, $K_4$, $K_8$ and $K_{10}$.

\item A(u) = $(u^2 - \case{4}{3})^{-2}$. In this case there exist four independent irreducible KTs,
the Koutras KT $L_{ab}$, and $K_1$, $K_5$ and $K_6$.

\item A(u) is not given by any of the above and satisfies (\ref{eq:cfpwcaseii4}) and there 
can be at most three independent KTs, one of which is the irreducible Koutras KT $L_{ab}$.
The other KTs may or may not be irreducible.

\item A(u) is not given by any of the above and satisfies (\ref{eq:cfpwviiibconditionA}) and
there are three irreducible KTs, the Koutras KT $L_{ab}$, $K_7$ and $K_9$.

\item A(u) is not given by any of the above and satisfies (\ref{eq:cfpwviiibconditionA2}) and
there are three irreducible KTs, the Koutras KT $L_{ab}$, $K_8$ and $K_{10}$.

\end{enumerate}
\end{thm}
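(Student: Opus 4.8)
The plan is to assemble Theorem~\ref{thm2} as a bookkeeping summary of the case-by-case analysis already carried out in this section, rather than to prove anything genuinely new. The starting point is the observation established earlier that, for a conformally flat plane wave spacetime, the KT equations of the appendix (with $\sigma=\zeta=\mu=\epsilon=0$) decouple into independent groups, and that only the five groups (i)--(v) listed above can contribute irreducible KTs. Hence the total set of independent irreducible KTs for a given metric function $A(u)$ is precisely the union of the irreducible KTs produced by each of these five groups, and the theorem is obtained by tabulating, for each admissible form of $A(u)$, which KTs survive in each group.

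The key steps, in order, are: (1) Record from group (iii) that the Koutras KT $L_{ab}=X_{1(a}X_{7b)}+ug_{ab}$ is always an irreducible KT for a conformally flat plane wave, \emph{except} when $A=\kappa u^{-2}$, where Corollary~\ref{cor1} shows it is reducible; this gives the ``one irreducible KT in general'' claim of the main statement and the reducibility in subcase~(i). (2) From group (i), recall that nontrivial solutions of (\ref{eq:cfpw1})--(\ref{eq:cfpw8}) force $A=\case{3}{16}u^{-2}$ or $A=(u^2-\case43)^{-2}$, each contributing the single irreducible KT $K_1$. (3) From group (ii), recall that (\ref{eq:cfpwcaseii3})--(\ref{eq:cfpwcaseii5}) force $\tau=\theta=\pi_{,u}=0$ for generic $A$, so no irreducible KT arises generically; for $A=\case{3}{16}u^{-2}$ the group yields the three irreducible KTs $K_2,K_3,K_4$, for $A=(u^2-\case43)^{-2}$ it yields the two irreducible KTs $K_5,K_6$, and in the $a_0=0$ subcase the constraint (\ref{eq:cfpwcaseii4}) (which is what is referred to in subcase~(iv)) permits at most two further irreducible KTs. (4) From groups (iv) and (v), recall that a contribution requires $A$ to solve (\ref{eq:cfpwviiibconditionA}) (giving $K_7$ and its $y\leftrightarrow z$ partner $K_9$) or (\ref{eq:cfpwviiibconditionA2}) (giving $K_8$ and its partner $K_{10}$); for $A=\case{3}{16}u^{-2}$ only (\ref{eq:cfpwviiibconditionA2}) is satisfied, contributing $K_8$ and $K_{10}$ but not $K_7,K_9$.

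Assembling the ledger then gives each subcase directly: for $A=\kappa u^{-2}$ with $\kappa\ne 3/16$, none of groups (i),(ii),(iv),(v) contributes and $L_{ab}$ is reducible, so there are no irreducible KTs; for $A=\case{3}{16}u^{-2}$ one collects $K_1$ from (i), $K_2,K_3,K_4$ from (ii), and $K_8,K_{10}$ from (iv)--(v), with $L_{ab}$ reducible, for six in total; for $A=(u^2-\case43)^{-2}$ one collects the irreducible $L_{ab}$, $K_1$ from (i), and $K_5,K_6$ from (ii), for four in total; and the remaining subcases (iv)--(vi) of the theorem simply restate the group-by-group findings for the exceptional differential equations. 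The one point requiring a little care — and the only place where the argument is more than transcription — is establishing independence: one must check that the KTs collected from distinct groups (and the reducible KTs) are genuinely linearly independent and that none of the listed irreducible KTs is a linear combination of the others together with a reducible KT. This follows because the groups involve disjoint sets of component functions ($\rho,\Psi,\Lambda$ versus $\tau,\omega,\theta,\pi$ versus $l,\Gamma,\Omega$ versus $h,\nu,\chi$, etc.), so a linear relation among them would have to hold group by group, reducing to the independence of the arbitrary constants already verified within each case; that verification, together with the explicit demonstration that the remaining groups yield only reducible KTs, is exactly the ``straightforward but lengthy calculation'' alluded to. I expect the main obstacle to be purely organizational: keeping the correspondence between arbitrary constants, KTs, and admissible $A(u)$ consistent across all five groups, and confirming in particular that $A=\case{3}{16}u^{-2}$ fails (\ref{eq:cfpwviiibconditionA}) so that $K_7,K_9$ are correctly \emph{absent} from subcase~(ii) of the theorem.
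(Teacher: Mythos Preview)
Your proposal is correct and matches the paper's approach exactly: the theorem is explicitly presented in the paper as a summary of the section's case-by-case analysis (``The results of this section can be summarized in the following theorem''), with no separate proof given, so your bookkeeping assembly of groups (i)--(v) is precisely what is intended. One small correction: the reason $K_7,K_9$ are absent for $A=\case{3}{16}u^{-2}$ is not that (\ref{eq:cfpwviiibconditionA}) fails, but that in the $\Gamma_1=\Gamma_2=0$ branch the case $A=\kappa u^{-2}$ is singled out \emph{before} one reaches (\ref{eq:cfpwviiibconditionA}) and shown to give only reducible KTs (indeed $X=2A+uA_{,u}$ vanishes identically there, so the construction of $K_7$ via $l=\alpha|X|^{-1/4}$ is not even defined).
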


\begin{cor}
The maximum number of independent irreducible KTs in a conformally flat plane wave spacetime is six.
\end{cor}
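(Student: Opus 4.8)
The plan is to read the Corollary off Theorem~\ref{thm2} by exploiting the fact, established above, that for a conformally flat plane wave spacetime the KT equations split into the five mutually independent groups (i)--(v). Because the groups decouple, the space of irreducible KTs is the direct sum of the contributions of the individual groups, so it suffices to bound the number of independent irreducible KTs each group can produce as a function of the metric function $A(u)$, and then to identify the $A(u)$ (if any) that simultaneously maximises all the contributions.

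First I would tabulate the per-group bounds already extracted in the case analysis: group (i) yields at most the single KT $K_1$, and only when $A$ solves the ODE~(\ref{eq:cfpwcasei1}), i.e. $A=\case{3}{16}u^{-2}$ or $A=(u^2-\case43)^{-2}$; group (ii) yields at most three independent irreducible KTs, this maximum being attained only for $A=\case{3}{16}u^{-2}$ (namely $K_2,K_3,K_4$) and at most two for any other admissible $A$; group (iii) yields only the Koutras KT $L_{ab}$, which is irreducible precisely when $A\neq\kappa u^{-2}$; and groups (iv) and (v), which are mirror images of one another under $y\leftrightarrow z$, each yield at most one KT, and only when $A$ solves (\ref{eq:cfpwviiibconditionA}) or (\ref{eq:cfpwviiibconditionA2}). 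Adding these, the total number of independent irreducible KTs never exceeds $1+3+1+1+1=7$ a priori, and a closer look shows $7$ is never realised: group (iii) requires $A\neq\kappa u^{-2}$, but the only non-$\kappa u^{-2}$ solutions of (\ref{eq:cfpwcasei1}) do not solve (\ref{eq:cfpwviiibconditionA}) or (\ref{eq:cfpwviiibconditionA2}) in the relevant way, so one can never have groups (i)--(v) all non-trivial at once.

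The decisive observation is then that $A=\case{3}{16}u^{-2}$ is a common solution of (\ref{eq:cfpwcasei1}) and of (\ref{eq:cfpwviiibconditionA2}); for this single metric function groups (i), (ii), (iv), (v) contribute $1+3+1+1=6$ independent irreducible KTs, while group (iii) contributes nothing because $A$ has the excluded form $\kappa u^{-2}$, giving exactly the six KTs $K_1,K_2,K_3,K_4,K_8,K_{10}$ of Theorem~\ref{thm2}(ii). For every other admissible $A(u)$ one checks, running through the cases of Theorem~\ref{thm2}, that the groups cannot all attain their maxima: if $A\neq\kappa u^{-2}$ then group (i) drops to at most one and group (ii) to at most two, so the best available total is the $A=(u^2-\case43)^{-2}$ case with $1+2+1+0+0=4$, or an $A$ solving (\ref{eq:cfpwviiibconditionA}) or (\ref{eq:cfpwviiibconditionA2}) with $0+0+1+1+1=3$; and $A=\kappa u^{-2}$ with $\kappa\neq\case{3}{16}$ gives none at all. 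Hence the maximum is six, attained (and only attained) for $A=\case{3}{16}u^{-2}$.

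The main obstacle is purely one of bookkeeping: one must be certain that the enumeration of metric functions satisfying the defining ODEs (\ref{eq:cfpwcasei1}), (\ref{eq:cfpwcaseii4}), (\ref{eq:cfpwviiibconditionA}) and (\ref{eq:cfpwviiibconditionA2}) is complete enough that no single $A(u)$ lies in more of these solution sets, in the combinations that actually generate KTs, than $A=\case{3}{16}u^{-2}$ does. Once that overlap analysis is in hand, the additivity over the separated groups makes the upper bound of six immediate, and the six explicit tensors exhibited for $A=\case{3}{16}u^{-2}$ show it is sharp.
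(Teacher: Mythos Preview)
Your proposal is correct and follows the same approach as the paper: the corollary is read directly off the exhaustive case list in Theorem~\ref{thm2}, and you have simply made the per-group bookkeeping explicit where the paper leaves it implicit. Your closing caveat about the completeness of the overlap analysis for groups (iv) and (v) is apt, since the paper itself notes that the general solution of (\ref{eq:cfviiimain equation}) was not obtained; the paper nonetheless states the bound on the strength of the cases actually found, and your argument matches that.
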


We remark that the above results are similar to those found by Kimura \cite{kimura76} - \cite{kimura79}
in that only a very few specific metrics admit irreducible KTs and some of those which do, admit many.
It is worth noting that even the class with $A = constant$ admits only one irreducible KT, i.e., the
Koutras KT $L_{ab}$. Further, we have been unable to find the general solutions corresponding to
the cases (iv), (v), and (vi) of Theorem \ref{thm2}.

The conformally flat plane-wave spacetimes correspond to the Sippel and Goenner spacetimes
of class 15, 16 and 17, which are {\it homogeneous pure radiation} solutions and can also be 
interpreted as Einstein-Maxwell solutions, see \cite{sippel86} for details.
The Sippel and Goenner \cite{sippel86} class 16 spacetime, i.e., $A = constant$, can also be interpreted as an
Einstein-Klein-Gordon solution, see \cite{maartens91} for details.

\section{Discussion}
\label{sec:discussion}
We have given the formal solution of the second order KT equations for the general pp-wave spacetime,
and have presented some noteworthy examples. We note that all physically meaningful
pp-wave spacetimes are subject to the energy condition (\ref{eq:energycondition}). The complete solution is
given for the conformally flat plane wave spacetimes and we have found that irreducible KTs arise
for specific classes. A number of theorems are given regarding the number of irreducible KTs
admitted by pp-wave spacetimes. 
It is worth noting that the technique used in the proof of Theorem \ref{thm1} can be applied
to {\it any} gradient CKV $Y$ and the condition for reducibility determined. Further, one could in principle apply
the reducibility condition to the {\it general} KT components given in the appendix.
So far we have been unable to explain geometrically {\it why} the value $\kappa = 3/16$ is
singled out amongst the singular scale-invariant plane wave spacetimes. To this end we
investigated the geodesic equations: As we have stated already, KTs are of interest principally because of
their association with quadratic first integrals of the geodesic equation. However, this did not provide any
further illumination: The geodesic equations can be integrated without recourse to the use
of the irreducible KTs which is in contrast to the situation in \cite{carter68} where complete
integration of the geodesic equations required the first integral arising from the irreducible KT.
The equation of geodesic deviation is important in the analysis of gravitational
waves and may provide some insight into the problem but we only touch upon geodesic deviation
briefly in what follows.
The general results for KTs in {\it vacuum} pp-wave spacetimes will be presented elsewhere.

The existence of KTs is an interesting topic in its own right, however, we shall now mention some
applications of the results obtained in this paper.

\subsection*{Equation of Geodesic Deviation}
Consider a family of geodesics $x^a = x^a(\lambda)$ in an arbitrary spacetime, 
where $\lambda$ is an affine parameter. Let $t^a = dx^a / d\lambda$ be the tangent vector to a geodesic.
The equation of geodesic deviation is
\[
t^b t^c \nabla_b \nabla_c \xi^a = {R^a}_{bcd} t^b t^c \xi^d
\]
where $\xi^a$ is the vector field representing the geodesic separation.
Let $X_A$, $A = 1, \dots , r$ be a basis for the isometry algebra ${\cal G}_r$.
Further, let $K_{a a_1 \dots a_p}$ be a KT of order $p+1$ and define the field
$W_a = K_{a a_1 \dots a_p} t^{a_1} \dots t^{a_p}$. We shall denote the set of such
fields as $W_B$, $B = 1, \dots , s$. The general solution of the equation of geodesic deviation involves
eight independent solutions and the set consisting of $t$, $\lambda t$, $X_A$ and $W_B$ are solutions
of the equation of geodesic deviation \cite{caviglia82}. For example, in the case of the general plane wave
spacetime (\ref{eq:hplanewave}) we have $r = 5$ and, on account of the existence of the Koutras KT
(\ref{eq:generalplanewavekoutraskt}), $s = 1$. These, together with $t$ and $\lambda t$ provide eight
solutions, however, their independence would have to be verified on a case-by-case basis.

\subsection*{Penrose Limits}
We will now investigate the existence of KTs in the Penrose Limits of two important spacetimes.
These Penrose Limits are derived in \cite{blau06}. First consider the Schwarzschild spacetime
\[
ds^2 = - f(r) dt^2 + f(r)^{-1} dr^2 + r^2 (d \theta^2 + \sin^2 \theta d \phi^2), \qquad f(r) = 1 - 2M/r.
\]
For radial null geodesics the Penrose Limit is Minkowski spacetime. However, the Limit for
the null geodesics at constant $r$ (i.e., the unstable circular orbits at $r = 3M$) is the
type $N$ vacuum plane wave spacetime (\ref{eq:example3}) which admits one irreducible KT.
For the non-radial, non-circular null geodesics the Penrose Limit of the singularity
is given by the type $N$ vacuum plane wave spacetime
\[
ds^2 =  - 2 du dv - \kappa (y^2 - z^2) u^{-2} du^2 + dy^2 + dz^2, \qquad \kappa = 6/25
\]
which, from Corollary \ref{cor:singularscaleinvariantvacuumplanewave}, admits no irreducible KTs.
Now consider the FRW spacetime
\[
ds^2 = -dt^2 + a(t)^2 [dr^2 + f_\epsilon(r)^2(d \theta^2 + \sin^2 \theta d \phi^2)]
\]
where $f_\epsilon(r) = r, \sin r, \sinh r$ for $\epsilon = 0, +1, -1$ respectively, with equation of state
$p(t) = \omega \rho(t)$ where $\omega \ge -1$ is constant. In the case $\omega = -1$ the Penrose Limit is
Minkowski spacetime. In what follows, the constant $h$ is defined by
$h = 2/[3(1 + \omega)]$. For $\omega > -1$ the Penrose Limit for the singularity of the FRW spacetime is given
by the type $O$ plane wave spacetime
\[
ds^2 =  - 2 du dv - \kappa (y^2 + z^2) u^{-2} du^2 + dy^2 + dz^2.
\]
For the case $\epsilon = 0$, and the case $\epsilon = \pm 1$ with $0 < h < 1$, the constant $\kappa = h (1 + h)^{-2}$
and it follows that $0 < \kappa < 1/4$.
Theorem \ref{thm2} states that in general this type $O$ plane wave spacetime will admit no irreducible KTs,
with the exception of the case $\kappa = 3/16$ which admits the maximum number of six. For $\epsilon = 0$ this value
of $\kappa$ corresponds to $h = 3$. 
The value $\kappa = 3/16$ can also arise in the regime $\epsilon = \pm 1$, $h \ge 1$, see \cite{blau06} for details.

\section*{Acknowledgments}
We would like to thank David Lonie and Graham Hall for comments on the first draft of this paper
and we would like to thank the referees for their comments and suggestions.

\appendix
\section*{Appendix}
\setcounter{section}{1}

The non-zero connection coefficients for (\ref{eqn:ppw}) are
\[
\Gamma^v_{uu} = H_{,u}, \qquad
\Gamma^y_{uu} = \Gamma^v_{uy} = \Gamma^v_{yu} = H_{,y}, \qquad
\Gamma^z_{uu} = \Gamma^v_{uz} = \Gamma^v_{zu} = H_{,z}.
\]
For the pp-wave spacetime, (\ref{eqn:killingtensor}) gives
20 independent differential equations. 
11 of these equations are independent of the metric function $H$ and its
derivatives, and 9 equations are dependent upon the metric function $H$.
Direct integration of these equations gives the following expressions for 
the components $K_{ab}$
\ba
K_{uu} = & [\mu H_{,y}   + \epsilon H_{,z} 
- \mu_{,uu} y - \epsilon_{,uu} z + \theta_{,uu} ]v^2 + M v + N
\nonumber\\
K_{uv} = & (\mu_{,u} y + \epsilon_{,u} z - \theta_{,u}) v + D
\nonumber\\
K_{uy} = & -\mu_{,u} v^2 + P v + R, \qquad K_{uz} = - \epsilon_{,u} v^2 + Sv + Q
\nonumber\\
K_{vv} = & W_1, \qquad K_{vy} = \mu v + W_2, \qquad K_{vz} = \epsilon v + W_3
\nonumber\\
K_{yy} = & - 2(\sigma z + \tau)v - \pi z^2 + \chi z + \xi
\nonumber\\
K_{yz} = & (\sigma y + \zeta z + \Lambda)v + \pi yz 
- \case{1}{2} (\chi y + \Omega z) + \Phi
\nonumber\\
K_{zz} = & - 2(\zeta y + \omega)v - \pi y^2 + \Omega y + \Sigma
\nonumber
\ea
where $M$, $N$, $D$, $P$, $Q$, $R$, $S$, $W_1$, $W_2$ and $W_3$ are functions of $u$, $y$ and $z$; 
$\mu$, $\epsilon$, $\theta$, $\sigma$, $\tau$, $\zeta$, $\Psi$, $\Gamma$, $\omega$, 
$\rho$, $\nu$, $\pi$, $\chi$, $\xi$, $\Lambda$, $\Omega$, $\Phi$ and $\Sigma$ are 
functions of $u$ only. The functions $W_1$, $W_2$ and $W_3$ are given by
\ba
W_1 = - 2 \mu y - 2 \epsilon z + 2 \theta, \qquad
W_2 = \sigma yz + \tau y - \zeta z^2 + \Psi z + \Gamma
\nonumber\\
W_3 = \zeta yz + \omega z - \sigma y^2 + \rho y + \nu.
\nonumber
\ea
The functions and constants are governed by the following differential equations:
\ba
M_{,u} = 2H_{,u} (\mu_{,u} y + \epsilon_{,u} z - \theta_{,u})
+ 2 H_{,y} P + 2 H_{,z} S
\label{eq:ppwde2}\\
N_{,u} = 2H_{,u} D + 2H_{,y} R + 2H_{,z} Q
\label{eq:ppwde3}\\
M + 2D_{,u} - 2H_{,u} W_1 - 2H_{,y} W_2 - 2H_{,z} W_3 = 0
\label{eq:ppwde4}\\
P_{,y} = \sigma_{,u} z + \tau_{,u} + 2 \mu H_{,y}
\label{eq:ppwde11}\\
\chi_{,u} z -\pi_{,u} z^2 + \xi_{,u} + 2 R_{,y} - 4H_{,y} W_2 = 0
\label{eq:ppwde12}\\
S_{,z} = \zeta_{,u} y + \omega_{,u} + 2 \epsilon H_{,z} 
\label{eq:ppwde13}\\
\Omega_{,u} y -\pi_{,u} y^2 + \Sigma_{,u} + 2 Q_{,z} - 4H_{,z} W_3 = 0
\label{eq:ppwde14}\\
\Psi_{,u} z + D_{,y} + P + \sigma_{,u} yz + \tau_{,u} y - \zeta_{,u} z^2 + \Gamma_{,u}
- 2 H_{,y} W_1 = 0
\label{eq:ppwde15}\\
\rho_{,u} y + D_{,z} + S + \zeta_{,u} yz + \omega_{,u} z - \sigma_{,u} y^2 + \nu_{,u}
- 2 H_{,z} W_1 = 0
\label{eq:ppwde16}\\
Q_{,y} + R_{,z} + \pi_{,u} yz - \case{1}{2} \chi_{,u} y - \case{1}{2} \Omega_{,u} z + \Phi_{,u}
- 2H_{,y} W_3 - 2H_{,z} W_2 = 0
\label{eq:ppwde2ndlast}\\
\rho + \Lambda + \Psi = 0
\label{eq:ppwdelast}\\
\mu H_{,uy}  + 3 \mu_{,u} H_{,y}
+ \epsilon H_{,uz} + 3 \epsilon_{,u} H_{,z}
- \mu_{,uuu} y - \epsilon_{,uuu} z + \theta_{,uuu} = 0
\label{eq:ppwde1}\\
-3 \mu_{,uu} + \mu H_{,yy} + \epsilon H_{,yz} = 0
\label{eq:ppwde5}\\
-3 \epsilon_{,uu} + \epsilon H_{,zz} + \mu H_{,yz} = 0
\label{eq:ppwde8}\\
S_{,y} + P_{,z} + \sigma_{,u} y + \zeta_{,u} z + \Lambda_{,u}
- 2 \epsilon H_{,y} - 2 \mu H_{,z}  = 0.
\label{eq:ppwde17}\\
\fl M_{,y} + 2 P_{,u} - 4H_{,y} (\mu_{,u} y + \epsilon_{,u} z - \theta_{,u})
-2 \mu H_{,u} 
\nonumber\\
+ 4 H_{,y} (\sigma z + \tau) -2H_{,z} (\sigma y + \zeta z + \Lambda) = 0
\label{eq:ppwde6}\\
\fl N_{,y} + 2 R_{,u} - 4 H_{,y} D - 2H_{,u} W_2
\nonumber\\
- 2H_{,y} (-\pi z^2 + \chi z + \xi)
- 2H_{,z} (\pi yz - \case{1}{2} \chi y - \case{1}{2} \Omega z + \Phi) = 0
\label{eq:ppwde7}\\
\fl M_{,z} + 2 S_{,u} - 4H_{,z} (\mu_{,u} y + \epsilon_{,u} z - \theta_{,u})
-2 \epsilon H_{,u} 
\nonumber\\
- 2 H_{,y} (\sigma y + \zeta z + \Lambda) + 4H_{,z} (\zeta y + \omega) = 0
\label{eq:ppwde9}\\
\fl N_{,z} + 2 Q_{,u} - 4 H_{,z} D - 2H_{,u} W_3
\nonumber\\
- 2H_{,z} (-\pi y^2 + \Omega y + \Sigma)
- 2H_{,y} (\pi yz - \case{1}{2} \chi y - \case{1}{2} \Omega z + \Phi) = 0
\label{eq:ppwde10}
\ea

\section*{References}


\begin{thebibliography}{99}

\bibitem{dolan89} Dolan P, Kladouchou W and Card C 1989 {\it Gen. Rel. Grav.} {\bf 21} 427

\bibitem{katzin65} Katzin GH and Levine J 1965 {\it Tensor, N.S.} {\bf 16} 97

\bibitem{hauser75a} Hauser I and Malhiot RJ 1975 {\it \JMP} {\bf 16} 150
\nonum Hauser I and Malhiot RJ 1975 {\it \JMP} {\bf 16} 1625

\bibitem{kerr63} Kerr R 1963 {\it Phys. Rev. Lett.} {\bf 11} 237

\bibitem{carter68} Carter B 1968 {\it Phys. Rev.} {\bf 174} 1559

\bibitem{walker70} Walker M and Penrose R 1970 {\it Commun. Math. Phys.} {\bf 18} 265

\bibitem{hughston72} Hughston LP, Penrose R, Sommers P and Walker M 1972 {\it Commun. Math. Phys.} {\bf 27} 303

\bibitem{hughston73} Hughston LP and Sommers P 1973 {\it Commun. Math. Phys.} {\bf 32} 147

\bibitem{kimura76} Kimura M 1976 {\it Tensor, N.S.} {\bf 30} 27

\bibitem{kimura77} Kimura M 1977 {\it Tensor, N.S.} {\bf 31} 187

\bibitem{kimura79} Kimura M 1979 {\it Tensor, N.S.} {\bf 33} 123

\bibitem{hauser74} Hauser I and Malhiot RJ 1974 {\it \JMP} {\bf 15} 816

\bibitem{rietdijk} Rietdijk RH and van Holten JW 1996 {\it Nuclear Physics B} {\bf 472} 427

\bibitem{visinescu09} Visinescu M 2009 {\it Journal of Physics: Conference Series} {\bf 189} 012044

\bibitem{cosgrove78} Cosgrove CM 1978 {\it J. Phys. A.} {\bf 11} 2405

\bibitem{DefriseCarter} Defrise-Carter L 1975 {\it Commun. Math. Phys.} {\bf 40} 273

\bibitem{HallSteele} Hall GS and Steele JD 1991 {\it \JMP} {\bf  32} 1847

\bibitem{blau06} Blau M, Borunda M, O'Loughlin M and Papadopoulos G 2004 {\it Class. Quantum Grav.} {\bf 21} L43
\nonum Blau M 2006 {\it Plane Waves and Penrose Limits} (Lecture notes \newline
http://www.blau.itp.unibe.ch/lecturesPP.pdf)

\bibitem{penrose76} Penrose R 1976 {\it Any space-time has a plane wave as a limit}
(Differential geometry and relativity, Reidel, Dordrecht) p271

\bibitem{ehlers62} Ehlers J and Kundt W 1962 {\it Gravitation: An Introduction to Current 
Research}, Edited by L. Witten, (Wiley, New York)
\nonum Jordan P, Ehlers J and Kundt W 1960 {\it Akad. Wiss. Lit. (Mainz) Abhandl. Math. - Nat. Kl.} {\bf 2} 21

\bibitem{sippel86} Sippel R and Goenner H 1986 {\it Gen. Rel. Grav.} {\bf 18} 1229

\bibitem{hawking1973} Hawking SW and Ellis GFR 1973 
{\it The Large Scale Structure Of Space-Time}, (Cambridge: Cambridge University Press)

\bibitem{KSMH2} Stephani H, Kramer D, MacCallum MAH, Hoenselaers C and Herlt E, 2004
{\it Exact Solutions to Einstein's Field Equations} 2nd edn, (Cambridge: Cambridge University Press)

\bibitem{griffiths09} Griffiths JB and Podolsky J 2009
{\it Exact Space-Times in Einstein's General Relativity} (Cambridge: Cambridge University Press)

\bibitem{koutras92} Koutras A 1992 {\it Class. Quantum Grav.} {\bf 9} 1573

\bibitem{rani03} Rani R, Edgar SB and Barnes A 2003 {\it Class. Quantum Grav.} {\bf 20} 1929

\bibitem{edgar04} Edgar SB, Rani R and Barnes A 2004 {\it Proc. Inst. Mathematics of NAS of Ukraine} {\bf 50} part 2, 708

\bibitem{maartens91} Maartens R and Maharaj SD 1991 {\it Class. Quantum Grav.} {\bf 8} 503

\bibitem{keanetupper04} Keane AJ and Tupper BOJ 2004 {\it Class. Quantum Grav.} {\bf 21} 2037

\bibitem{carot2008} Carot J, Keane AJ and Tupper BOJ 2008 {\it Class. Quantum Grav.} {\bf 25} 055002

\bibitem{corless96} Corless RM, Gonnet GH, Hare DEG, Jeffrey DJ and Knuth DE 1996
{\it Advances in Computational Mathematics} {\bf 5} 329

\bibitem{caviglia82} Caviglia G, Zordan C and Salmistraro F 1982 {\it International Journal of Theoretical Physics} {\bf 21} 391

\end{thebibliography}
\end{document}